\newtheorem{lemma}{Lemma}
\newtheorem{theorem}{Theorem}
\newtheorem{corollary}[theorem]{Corollary}
\newtheorem{proposition}[theorem]{Proposition}
\theoremstyle{definition}
\newtheorem{definition}{Definition}
\newtheorem{question}{Question}
\newtheorem{remark}{Remark}
\newtheorem{observation}{Observation}
\def\blfootnote{\xdef\@thefnmark{}\@footnotetext}
\newcommand{\prob}[2][]{\text{\bf Pr}\ifthenelse{\not\equal{}{#1}}{_{#1}}{}\!\left[#2\right]}
\newcommand{\expect}[2][]{\mathbb{ E}\ifthenelse{\not\equal{}{#1}}{_{#1}}{}\!\left[#2\right]}
\newcommand{\mech}{\mathcal{M}}
\newcommand{\allocs}{\mathbf{x}}
\newcommand{\paymts}{\mathbf{p}}
\newcommand{\multidist}{\mathbf{F}}
\newcommand{\opt}{{\normalfont\textsc{Opt}}}
\newcommand{\E}{\mathbb{E}}
\def\P{\ensuremath{\mathcal{P}}}
\def\Pr{\ensuremath{\mathrm{Pr}}}
\def\welfare{\ensuremath{\textsc{Welfare}}}
\def\utility{\ensuremath{\textsc{Utility}}}
\def\revenue{\ensuremath{\textsc{Revenue}}}
\def\v{\ensuremath{\mathbf{v}}}
\newcommand{\notshow}[1]{{}}
\newcommand*{\pr}[2][]{\text{Pr}\ifx\\\left[#1\right]\\\else_{#1}\fi \left[#2\right]}
\newcommand{\reals}{\mathbb{R}}
\newcommand{\logfave}{Prior-Free-Favorites\xspace}
\newcommand{\regfave}{Random-Favorites\xspace}
\newcommand{\uniformfave}{uniform favorites\xspace}
\begin{document}
\title{Multidimensional Bayesian Utility Maximization:\\ Tight Approximations to Welfare}
\author{Kira Goldner\thanks{Boston University; {\tt goldner@bu.edu}. Supported by NSF Award CNS-2228610, NSF CAREER Award CCF-2441071, and a Shibulal Family Career Development Professorship.}
\and  
Taylor Lundy\thanks{University of British Columbia; {\tt tlundy@cs.ubc.ca}. Supported by an NSERC Discovery Grant, a DND/NSERC Discovery Grant Supplement, a CIFAR Canada AI Research Chair (Alberta Machine Intelligence Institute), awards from Facebook Research and Amazon Research, and DARPA award FA8750-19-2-0222, CFDA \#12.910 (Air Force Research Laboratory).}
}

\date{February 15, 2025}

\maketitle

\thispagestyle{plain}

\begin{abstract}
We initiate the study of multidimensional Bayesian utility maximization, focusing on the unit-demand setting where values are i.i.d.~across both items and buyers.
The seminal result of Hartline and Roughgarden '08 studies simple, information-robust mechanisms that maximize utility for $n$ i.i.d agents and $m$ identical items via an approximation to social welfare as an upper bound, and they prove this gap between optimal utility and social welfare is $\Theta(1+\log{n/m})$ in this setting. We extend these results to the multidimensional setting.  To do so, we develop simple, prior-independent, approximately-optimal mechanisms, targeting the simplest benchmark of optimal welfare.  We give a $(1-1/e)$-approximation when there are more items than buyers, and a $\Theta(\log(n/m))$-approximation when there are more buyers than items, and we prove that this bound is tight in both $n$ and $m$ by reducing the i.i.d.~unit-demand setting to the identical items setting.  Finally, we include an extensive discussion section on why Bayesian utility maximization is a promising research direction. In particular, we characterize complexities in this setting that defy our intuition from the welfare and revenue literature, and motivate why coming up with a better benchmark than welfare is a hard problem itself.
\end{abstract}


\section{Introduction}

Utility maximization---also sometimes called consumer surplus, residual surplus, money burning, or a setting with ordeals---is best motivated by a social service provider who wishes to allocate goods whose demand far exceeds their supply (or budget), such as medications and vaccines.  In such cases, it would be inequitable to gate-keep these goods by charging a price.  Instead, acquiring these items is often made laborious by requiring consumers to wait in long lines, fill out forms, travel physical distances, or even try other medications first.  These \emph{ordeals} act like payments by ensuring that only those who have high value for the good are actually able to obtain it.  As with social welfare maximization, the social service provider has the option to use (non-monetary) payments as a tool in order to allocate items to those with higher need; in contrast, in this setting, doing so comes at a cost to the objective function, as neither consumer nor seller benefit from these ordeals.  Crucially, this yields a trade-off in the objective function between allocation and payment---is it more important to give the goods to the right people, potentially using expensive payments that decrease utility to do so, or to minimize payments, allowing the allocation to be more random?  Despite sitting between social welfare maximization and payment minimization, intuition from both regimes can fail, requiring clever solutions somewhere in between.

 In the single-dimensional setting, utility-optimal mechanism design is solved by a closed-form theory~\citep{HartlineR08}. In this seminal paper, \citeauthor{HartlineR08} study the setting where there are $m$ identical items and $n$ i.i.d.~ agents. In addition to other contributions, they design a prior-free mechanism that achieves an $O(1 + \log \frac{n}{m})$-approximation of utility to  social welfare, which they prove is tight.

 In the multidimensional setting, the problem becomes unwieldy, much like revenue maximization. 
 Nothing is known here on the structure of optimal mechanisms. 
  And, unlike revenue, the single-bidder multidimensional problem does not give us a foothold, as the tension between efficient allocation and using payments comes only from limited supply between agents. 
  In this paper, we initiate the study of multidimensional utility maximization from a Bayesian mechanism design perspective. 
  Given the complexities of this setting (discussed more in Section~\ref{sec:why}), our main approach is thus to follow in the footsteps of \citet{HartlineR08} and the revenue maximization literature, and attempt to find a simple mechanism that can approximate the utility of the optimal mechanism.


 \paragraph{Main Contribution: Generalizing \citet{HartlineR08} to the Multidimensional Setting.} The main focus of this paper is generalizing the results of \citet{HartlineR08} from the single-dimensional $m$ identical items setting to the multidimensional $m$ identically distributed items setting for unit-demand bidders.  (See Remark~\ref{remark:iid-ud}.)  That is, we aim to answer the following question.

 \begin{question} \label{q1}
    In the i.i.d.~unit-demand setting with $n$ buyers and $m$ items, what utility can we obtain in approximation to optimal social welfare and by what simple mechanisms? 
\end{question}  
 
 We do exactly this, tackling the complexities that arise from the multidimensional setting, designing simple mechanisms, and matching the tight single-dimensional bounds.  We use a class of mechanisms that we refer to as Favorites Mechanisms,  further discussed in Section~\ref{sec:simplemechs}. 

 \emph{Favorites Mechanisms:} We use the same underlying principle for all of our results: each buyer points to their favorite item, and then allocation is handled per-item among those who declare it their favorite via some fixed single-item mechanism.  In the more-items-than-bidders regime, the single-item mechanism assigns items uniformly at random among those competing for the same favorite.
In the more-bidders-than-items regime, we use a known single-dimensional prior-free mechanism from \citet{HartlineR08} to allocate each individual item.

\emph{Results.} In this paper, we obtain utility that approximates optimal welfare  when there are $n$ buyers, $m$ items, and buyer values for items are drawn i.i.d.~across both items and buyers.  We achieve a $(1-1/e)$-approximation when there are more items than buyers ($m\geq n$) and an $O(\log \frac{n}{m})$-approximation when there are more buyers than items ($n>m$). We then create an instance of the identical-item setting from the i.i.d.~unit-demand setting without increasing social welfare too much and only increasing optimal utility, allowing us to leverage the tight example from \citet{HartlineR08}. All together, Section~\ref{sec:main} answers Question~\ref{q1} with a 
$\Theta(1 + \log \frac{n}{m})$ gap between optimal utility and social welfare. 

 \paragraph{Additional Contribution: Discussion on the Complexity of Multidimensional Bayesian Utility Maximization.} 
A natural first approach in studying this problem would be to attempt to leverage the tools from the well-studied field of revenue maximization. We formalize the intuition behind why some of these powerful techniques cannot be naively applied in this setting. Our second contribution is an extensive discussion on general multidimensional Bayesian utility maximization, why this direction is likely to be technically interesting and challenging, and how it differs from what is known about multidimensional Bayesian revenue maximization (Section~\ref{sec:why}). 

The tightness of our main contribution demonstrates a need for better upper bounds than welfare.  
While revenue maximization does offer tools for developing such bounds, we face two key hurdles. First, we show that substituting a bidder with multiple single-minded copies of themselves—while keeping the total number of items fixed—can actually reduce utility. This implies that the widely used “copies” technique from revenue maximization fails to provide a universal upper bound (see Theorem~\ref{thm:copies}).
 Second, known duality techniques reduce the problem to constructing a dual for the single-bidder problem, but as mentioned above, the single-bidder multidimensional problem is trivial and provides no insight into the structure of the optimal multidimensional mechanism. Moreover, attempts to adapt the successful flows from revenue maximization can yield bounds even weaker than welfare, highlighting the challenge of applying these techniques in our setting.

Going beyond revenue maximization, we then check whether a key insight from the single-item utility maximization literature still holds. Namely, in the single-item setting, the designer can always allocate the item to some bidder and need not consider the outcome where the item is left unallocated. However, we find that this is no longer true in the multidimensional setting. Instead, the \emph{utility}-optimal mechanism sometimes must \emph{throw away} an item just to guarantee optimality (Theorem~\ref{thm:optstructure})! Of course, we see similar mechanisms 
in revenue maximization, but there the goal is to maximize payments at any cost. Here, the objective is to simultaneously allocate efficiently and minimize payments, so the idea that \emph{not allocating} could somehow help defies intuition.  But in the multidimensional setting, depending on how strong a buyer's preference for one item over another is, not allocating can actually reduce payments enough to improve utility overall (Observation~\ref{obs:typegap}).

By demonstrating why standard approaches from revenue maximization fail in this setting, we expose key gaps in our understanding, 
and shed light on the new technical challenges posed by Bayesian utility maximization, which we anticipate will generate significant interest for future work.
We hope that our discussion serves as a roadmap for researchers entering this area, offering intuition, highlighting key obstacles, and identifying interesting directions for future work.

\subsection{Related Work}

Utility maximization has been well-explored from the public finance literature in economics~\citep{mcgee2003philosophy} and the strategic queuing literature in operations~\citep{hassin2003queue}.
Ordeals have also been well-studied in economics, both as a method to elicit information without using monetary payments, and in the efficiency-equity trade-off when using in-kind transfers~\citep{nichols1982targeting}.  However, utility maximization and ordeals have minimal prior work from the algorithmic mechanism design perspective. 

In 2008, \citeauthor{HartlineR08} completely resolved the question of optimal utility maximization in the single-dimensional Bayesian setting.  An earlier paper by \citep{chakravarty2006manna} also studies when, for the single-dimensional setting of multi-unit unit-demand, the utility-optimal mechanism gives randomly gives items away for free.  \citet{condorelli2013market} produces a similar theory by extending \citet{Myerson} to characterize optimal mechanisms for settings both with and without money, and parameterizes these results in an agent's willingness-to-pay, which does not have to be equal to an agent's value.  \citet{HartlineR08} also 
introduce an idea for a single-dimensional prior-free benchmark and provide a constant-approximation to that benchmark, in addition to their tight $\Theta(1 + \log \frac{n}{m})$-approximation to social welfare.

The best (and only) known multidimensional approximation prior to our work is $O(\log |\Omega|)$ where $\Omega$ is the space of possible allocation outcomes~\citep{fotakis2016efficient}.  In the unit-demand setting, this is $O(m \log n)$.  The mechanism used is a combination of VCG and the uniformly random free allocation that uses ideas from the above prior-free mechanism.  We discuss in Section~\ref{sec:simplemechs} why these ideas are not enough, and how the idea of Favorites is crucial.  In addition, \citet{fotakis2016efficient} give an algorithmic non-constructive result that achieves the same utility approximation while simultaneously guaranteeing constant welfare, although they note that running VCG with probability 1/2 will do this as well.


Very recent subsequent work by~\citet{ezra2024optimal} also designs mechanisms whose utility approximates the benchmark of optimal social welfare. Most salient, they achieve a $\Theta(\log n)$-gap between optimal utility and social welfare for general unit-demand bidders (see Remark~\ref{remark:improve-in-m} on the difference between the general and i.i.d.~settings, and the improvement of this gap in $m$), and they match our $O(\log \frac{n}{m})$-approximation for the i.i.d.~unit-demand setting (and in fact, for the uniform favorites setting of Section~\ref{sec:noniid}). They consider two additional settings---multi-unit auctions where buyers have submodular valuations, and
auctions with divisible goods where buyers have concave valuations---in both of which they match this $\Theta(\log n)$-gap.

Recent work has studied variations on ordeals and money burning, but most of these settings are single-dimensional.   Motivated by health insurance, \citet{EssaidiGW20} investigate whether, in equilibrium, limiting entry of sellers into a two-sided market improves consumer surplus or not. 
\citet{dworczak2023equity} study a different but related problem, the allocation of (divisible) money, and characterize when using payments via ordeals to better allocate the money yields more utility than ordeal-free equal payments to all agents. \citet{patel2022costly} consider the interplay between verification, costly only to the designer, and ordeals, costly to both, in the single-dimensional i.i.d. setting.
\citet{lundy2019allocation} also study the power of verification, but in the setting of allocating indivisible resources to food banks using wait times as an ordeal. \citet{lundy2024pay} study utility maximization in the context of mobile games that care about platform retention. 

A few recent works foray into multidimensional settings. \citet{ashlagi2023optimal} study the objective function that is a weighted sum of allocative efficiency and social welfare, which can include utility, for the setting of unit-demand agents for totally-ordered objects, restricted to i.i.d. This setting was introduced in \citep{FGKK} and proven to be in between single- and multi-dimensional environments by several complexity metrics.  \citet{BravermanCK16} study a truly multidimensional problem, the allocation of heterogeneous healthcare services to unit-demand patients where wait times serve as ordeals, but they focus on the computation of efficient equilibria.  \citet{yang2022costly} and \citet{ADY} study a multidimensional ordeal setting that is quite different from what we study.  The mechanism designer can allocate either a good (which has positive value) or an ordeal (which has negative value) and can also use monetary payments (which are transferable and do not impact the objective function). They derive conditions under which it is optimal to only allocate the good and not an ordeal, and under which one ordeal mechanism dominates another.

Note that while there is ample work on mechanism design \emph{without} money, it is not relevant to this work, as we are particularly exploring the setting with payments that hurt the objective, and the trade-off of using them.

\paragraph{Simple-Yet-Approximately-Optimal Mechanisms for Seller Revenue.} While single-dimensional Bayesian revenue maximization is solved in closed-form by Myerson's Nobel-prize-winning theory~\citep{Myerson}, 
in the multidimensional setting, even selling two heterogeneous goods to a single additive buyer can be intractable, requiring infinitely many randomized options to extract optimal revenue~\citep{MV06,DDT}.
As a result of this intractability, a large body of work has instead focused on approximation. In particular, this work has designed \emph{simple} mechanisms that yield constant-factor approximations to the optimal revenue in a variety of settings \citep{CDW,CHMS,BILW,CZ17,EFFTW2,CDGM19,Yao15,CM,RW}, developing a barrage of techniques and insights along the way.  There are two main steps required in these approaches: finding an upper bound on the optimal revenue, and then determining which simple mechanisms can be used to approximate this benchmark.  For most of these settings, the mechanisms are some variation on either selling all of the items separately, or selling them all together in one grand bundle.  

\emph{This work.} The goal of this work is to take this approach for the setting of multidimensional Bayesian utility maximization.  In particular, we focus on what the right simple mechanisms may be for this objective, approximate the benchmark of welfare, and study the  i.i.d.~unit-demand setting.



\section{Preliminaries}

We consider the setting where the mechanism designer wishes to sell $m$ heterogeneous items to $n$ bidders.  Agent $i$'s valuation is a vector $v_{i}\in \reals^{m}$, where $v_{ij}$ denotes agents $i$'s private value (or willingness-to-pay) for item $j$. Value $v_{ij}$ is drawn independently from a known prior distribution $v_{ij} \sim F_{ij}$, and in particular, we assume that values are independently and identically distributed (i.i.d.) across both items and bidders. That is, that there exists some type distribution $F$ such that $F_{ij} = F$ for all $i,j$. We overload notation and let $F(\cdot)$ denote the CDF and $f(\cdot)$ denote the PDF of the distribution. Let  $\multidist = F^{n \times m}$ denote the joint distribution. 
We use $\v$ to denote the vector of all agents' types, $\v = (v_1, \ldots, v_n) \in \reals^{n\times m}$.

\begin{remark}[I.I.D.~Unit-Demand] \label{remark:iid-ud}
    We generalize the ``multi-unit auction'' setting of \citet{HartlineR08}, where there are $m$ identical items, and each agent has a single value for all $m$ items.  Our extension instead considers when distinct item values are drawn independently and identically distributed (i.i.d.), with buyers remaining unit-demand.
    That is, the items are non-identical, heterogeneous. 
    As with when distributional assumptions are used in \citet{HartlineR08}, values are also i.i.d.~across buyers.
\end{remark}

A mechanism $\mech$ is comprised of an allocation rule $\allocs: \reals^{n\times m} \rightarrow [0,1]^{n\times m}$ and a payment rule $\paymts: \reals^{n\times m} \rightarrow \reals^{n}$ where  $x_i(\v)$ and $p_i(\v)$ are the respective allocation and payment for agent $i$ under type $\v$. Note that $x_i(\v)$ is a vector where $x_{ij}(\v)$ denotes the probability that bidder $i$ receives item $j$ under type profile $\v$, and $p_i(\v)$ is a non-negative number.

Agents are \emph{unit-demand}, wanting at most one item.  Formally, we would write $v_i(S) = \max _{j \in S} v_{ij}$; agent $i$ only gets value from his favorite item among those allocated.  However, it is without loss to restrict attention to allocation rules that allocate at most one item to each agent, that is, have the feasibility constraint  $\sum_j x_{ij}(\v) \leq 1$ for all agents $i$ and all $\v$. Then agent $i$'s expected utility is
$\sum_j x_{ij}(\v) v_{ij} - p_i$, which we rewrite as $x_i(\v) v_i - p_i(\v)$, where the first term represents the dot product of two vectors.

We define the expected social welfare, utility, and revenue objectives:
$$\utility = \E_{\v \sim \multidist} \left[\sum_i x_i(\v)v_i - p_i(\v) \right] \hspace{1cm}
\welfare = \E_{\v \sim \multidist} \left[\sum_i x_i(\v)v_i \right]$$
$$\revenue = \E_{\v \sim \multidist} \left[ \sum_i p_i(\v) \right]$$
and observe by linearity of expectation that 
$$\utility = \welfare - \revenue.$$

A mechanism is Bayesian incentive-compatible (BIC) if, in expectation over the other agents' reports, it incentivizes truthful reporting:
$$\E_{\v_{-i} \sim \multidist_{-i}} \left[x_i(v_i, \v_{-i}) v_i - p_i(v_i, \v_{-i})\right] \geq \E_{\v_{-i} \sim \multidist_{-i}} \left[x_i(v'_i, \v_{-i}) v_i - p_i(v'_i, \v_{-i}) \right] \quad \forall i, v_i, v_i'$$
where the subscript $-i$ indexes the vector everywhere but $i$.

For an objective, we let $\opt$ denote the optimal amount attainable by any feasible, truthful mechanism, that is, $$\opt_{\utility} = \max 
_{(\allocs, \paymts) \in \P} \E_{\v \sim \multidist} \left[\sum_i x_i(\v)v_i - p_i(\v) \right].$$

where $\P$ denotes the set of feasible, BIC, and individually rational mechanisms.  Namely for feasibility 
$$\sum_j x_{ij}(\v) \leq 1 \quad \forall i, \v \hspace{1cm} \text{and} \hspace{1cm} \sum_i x_{ij}(\v) \leq 1 \quad \forall j, \v$$
and for individual rationality
$$\E_{\v_{-i} \sim \multidist_{-i}} \left[x_i(v_i, \v_{-i}) v_i - p_i(v_i, \v_{-i})\right] \geq 0 \quad \forall i, v_i .$$


\section{Main Result: Approximating Utility via Welfare} \label{sec:main}
 

In this section, we generalize the seminal result of \citet{HartlineR08}: we prove a bound on the gap between the optimal welfare and the optimal utility in the i.i.d.~unit-demand setting. 
They study the $m$ identical item setting with $n$ i.i.d agents and construct a prior-free mechanism that achieves a tight bound on this gap. 
We generalize this result to the multidimensional setting, when the $n$ i.i.d. buyers have \emph{identically distributed} values for the $m$ items. 
This introduces all of the complexities of the multidimensional setting, which we elaborate on in Section~\ref{sec:why}. 

The best known bound on this gap is that optimal utility is a $\log(|\Omega|)$-approximation to optimal welfare \citep{fotakis2016efficient}, where $\Omega$ is the set of possible outcomes. With $n$ agents and $m$ items the number of possible outcomes is $n^m$, meaning this bound is approximately $O(m\log n)$. 

Our results give two new tighter bounds for the i.i.d. unit-demand setting: 
\begin{itemize}
    \item in the case where $m\geq n$, we give a constant $1- \frac{1}{e}$ bound, and
    \item in the case where $m<n$, we give an $O( \log\frac{n}{m})$ bound, which we then prove is tight up to constant factors,
\end{itemize}
implying an overall $\Theta(1 + \log \frac{n}{m})$-bound.

We use the $\mech$-Favorites mechanism for both cases: separating bidders by their favorite item into a single-item setting. However, we use different single-dimensional mechanisms $\mech$ in each regime to allocate the items: \regfave in the $m \geq n$ regime and \logfave in the $n > m$ case.

It may appear that these regimes could be unified via the \logfave approach. We point out in Section~\ref{sec:morebidders} where the analysis breaks for $n < m$, rendering both regimes necessary.

We remark that neither mechanism requires knowledge of the prior distribution, so these mechanisms are \emph{prior-independent}.  However, our attention is restricted to the setting where values are i.i.d.~across both buyers and items in order for BIC and the approximation analyses to hold.\footnote{The majority of multidimensional prior-independent mechanisms require that values be drawn i.i.d.~across bidders, but do not require this also across items.  We discuss the limitation of our approach beyond i.i.d.~in Section~\ref{sec:noniid}.}

\subsection{Candidate Simple Mechanisms} \label{sec:simplemechs}

In this section, we motivate candidate simple mechanisms. We seek an analogue to those in the revenue literature, where selling separately and selling the grand bundle are the two simple mechanisms that, together (and with an entry-fee), capture all cases and give a constant-factor approximation. 
We start by considering the optimal single-dimensional mechanisms from \citet{HartlineR08} and their potential multidimensional analogues.  

At its extremes, the utility-optimal single-dimensional mechanism, discussed further in  Section~\ref{sec:optmechs}, looks either like a uniformly random free allocation or a second-price auction.   
For this reason, we take the multidimensional analogues of these two mechanisms as candidates for natural ``simple'' mechanisms that we may wish to consider when approximating utility---(1) a uniformly random free allocation of goods to buyers, and (2) the VCG mechanism \citep{vickrey,clarke,groves}, the welfare-optimal mechanism that is the multidimensional analogue of the second-price auction. 

Randomly giving the items away for free has clear utility benefits, as it is the payment-minimizing allocation. However, with multiple items, it is easy to imagine an outcome where two of the random winners who would prefer to exchange items. Then a clear improvement over this mechanism would allow agents to express preferences over which item they are in contention for.


In contrast, VCG not only allows agents to express preferences, but it ensures that the items are allocated in an efficient way. When agents prefer different items, VCG can actually allocate these items for free, yielding utility that matches welfare. Where VCG does poorly is when multiple bidders prefer the same item, VCG charges potentially very large payments in order to ensure the efficient allocation. In fact, \citet{fotakis2016efficient} show that VCG's utility cannot better than $|\Omega|$-approximate social welfare where $\Omega$ is the space of allocative outcomes, but their proof uses single-minded agents (with complementarities) to show this.


This motivates a mechanism that allows bidders to express their preferred item while handling ``collisions'' in buyer preferences in a way that does not hurt utility too much. Our simple mechanism essentially must be of the form where  buyers express their favorite item, and allocation is determined from there. We define a family of mechanisms which we call $\mech$-Favorites (Definition~\ref{def:favorites}), which allow agents to report their favorite items, and then uses some fixed mechanism $\mech$ to resolve conflicts between bidders with the same favorite. A huge added benefit is the tractability this brings by allowing us to decompose the problem into many single-item problems. Moreover, we have the property that if items are distributed i.i.d., then $\mech$-Favorites is BIC whenever $\mech$ is BIC (Lemma~\ref{lem:fave-bic}).


\begin{definition}[$\mech$-Favorites] \label{def:favorites}
    Each bidder announces their favorite item $j$, the item for which they drew the highest value. Then, for the given BIC  single-item mechanism $\mech$, for each item $j$, $\mech$ is run on the bidders who reported item $j$ as their favorite. 

    In particular, we will focus on two variants.
    \begin{itemize}
        \item \regfave: $\mech$ allocates uniformly at random among all participants (and charges nothing). That is, each item is allocated uniformly at random among one of the bidders who declares it their favorite.
        \item \logfave: $\mech$ is the single-item prior-free mechanism below.  
    \end{itemize}
\end{definition} 
\begin{definition}[Single-Item Prior-Free Mechanism] \label{def:priorfreehr}
The mechanism of \citet{HartlineR08} used as a subroutine in our multidimensional \logfave mechanism is as follows:
\begin{itemize}
    \item Rename the bids participating in this single-dimensional mechanism $v_1 > \ldots > v_{n'}$.
    \item Choose $j$ uniformly at random from $\{0, \ldots, \log(n')\}$.
    \item Run a $v_{2^j+1}$-lottery (with $v_{n'+1} := 0$): choose a bidder uniformly at random from those who report at least $v_{2^j+1}$, then allocate to this bidder and charge price $v_{2^j+1}$.
\end{itemize}
That is, the bidders are ordered by report, a power-of-two bidder is chosen uniformly at random, and then this bidder is used as a price, and the mechanism allocates to a bidder chosen uniformly at random from those who reported above this price.
\end{definition}
We now prove that this mechanism is truthful in the i.i.d. setting.
\begin{lemma} \label{lem:fave-bic}
The $\mech$-Favorites mechanism is Bayesian incentive-compatible (BIC).
\end{lemma}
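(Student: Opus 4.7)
The plan is to decompose a potential deviation by bidder $i$ into two parts: (a) which item $j'$ is declared as the favorite, and (b) what value (if any) is reported to $\mech$ running on item $j'$. Assuming all other bidders are truthful, I will show the best response is to truthfully report the true favorite together with the true value for it.

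The first step is to use the i.i.d.\ assumption and symmetry across items to establish that, from bidder $i$'s perspective, every item $j'$ looks ex ante identical as a target of deviation. Specifically, under truthful play by the other $n-1$ bidders, each bidder $k\neq i$ independently declares $j'$ as their favorite with probability $1/m$, and conditional on doing so, $v_{kj'}$ is distributed as the maximum of $m$ i.i.d.\ draws from $F$; call this induced single-item prior $\G$. Crucially, $\G$ does not depend on $j'$. Hence the sub-instance of $\mech$ that bidder $i$ would face by declaring $j'$ as their favorite has the same distribution of competitors (in number and value) for every choice of $j'$.

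The second step is to fix the favorite report $j'$ and apply the BIC guarantee of $\mech$. Since $\mech$ is BIC with respect to $\G$ (in \regfave trivially, since the allocation ignores values and no payments are charged; in \logfave because the Hartline--Roughgarden mechanism is DSIC hence BIC for any prior), bidder $i$'s optimal value report to $\mech$ given that they declared $j'$ their favorite is their true value $v_{ij'}$. Let $g(v)$ denote the resulting expected utility as a function of that reported (and true) value; by the symmetry established above, $g$ is a single function independent of which $j'$ was declared. Standard properties of BIC single-item mechanisms (monotone interim allocation plus the envelope formula) give that $g$ is non-decreasing. Therefore bidder $i$'s expected utility from declaring $j'$ as favorite and reporting truthfully is $g(v_{ij'})$, which is maximized by taking $j' = \argmax_{j''} v_{ij''}$, i.e., the true favorite.

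The main obstacle I anticipate is being careful about the ``composition'' step: one must argue that BIC of $\mech$ actually applies in the induced sub-instance (where the set of participants and their value distribution are themselves random) rather than in a fixed single-item environment. Because the induced prior $\G$ is identical across items and because other bidders' reports to $\mech$ are truthful by inductive assumption, this reduces to invoking BIC of $\mech$ against the product prior $\G^{K}$ for a random $K$, which follows from BIC holding pointwise in the number and types of competitors. Once this reduction is in hand, monotonicity of $g$ closes the argument.
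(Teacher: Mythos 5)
Your proof is correct and follows essentially the same structure as the paper's: use i.i.d.\ symmetry to show that the interim allocation (and hence interim utility) as a function of reported value is identical regardless of which item is declared as favorite, then invoke monotonicity of that interim utility (from truthfulness of $\mech$) to conclude that declaring the true favorite is optimal. Where you go beyond the paper is in the care you take at the ``composition'' step: you explicitly note that applying BIC of $\mech$ inside a sub-instance whose participant set and value distribution are themselves random is not automatic, and you resolve it by observing that both \regfave and \logfave are in fact ex-post/dominant-strategy IC, so truthfulness of the inner mechanism holds pointwise in the realized competitor set. This is a real subtlety the paper's one-line ``as $\mech$ is truthful, the allocation is monotone'' glosses over, and your handling of it is the right one (and also explains why the lemma, though stated for a generic BIC $\mech$, is really being used only with DSIC inner mechanisms). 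One small wording nit: ``by inductive assumption'' should just be ``by assumption'' --- there is no induction; you are simply fixing the other bidders to play truthfully, which is exactly what BIC requires.
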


\begin{proof}
Since the single-item subroutine $\mech$ is BIC, its interim allocation rule for any agent $i$ can only depend on the expected number of participants in the mechanism, their prior distributions, and $i$'s report.  We now argue that for every agent $i$, these parameters are the same for every item $j$, hence it is in every agent's best interest to report their favorite item, and then report their true value to the already BIC single-item subroutine.

As each bidder's value for each item $v_{ij}$ is drawn i.i.d, the same expected number of bidders will report each item as their favorite, hence each item will have the same expected number of participants in its single-item mechanism. The i.i.d. assumption also guarantees that the mechanism's input prior distributions are the same. 
Since $\mech$ is truthful, the allocation obtained by participating in it is monotone non-decreasing in the bidder's report. Hence the bidder will achieve the highest utility by participating in the mechanism for the item for which they have the highest value: their favorite item.  Since this allocation is BIC,\footnote{It is worth noting that this procedure is not DSIC. The fact that the allocation is the same across items only holds in expectation.} there exist payments that implement it.
\end{proof} 



Having motivated the use of $\mech$-Favorites to approximate expected utility via social welfare in the i.i.d. setting, we are ready for our main results. 

\subsection{More Items than Bidders}

In this section we prove a constant bound between optimal social welfare and optimal utility using the \regfave mechanism. 



\begin{theorem} \label{thm:constantapx}
\regfave achieves utility at least $\left(1 -\frac{1}{e}\right) \opt_{\welfare}$ when $m\geq n$.
\end{theorem}

\begin{proof}
We begin with the case of an equal number of bidders and items ($m=n$). In the \regfave mechanism, an item is \emph{not} allocated only when it is not any agent's favorite item, which occurs with probability $(1-1/m)^n$.  As $m=n$, then the item \emph{is} allocated with probability
$$1 - \left(1 -\frac{1}{m}\right)^{n}   \geq 1 - \frac{1}{e}.$$ 
Because $m=n$, if an item goes unallocated, then so does an agent, and vice versa. Hence, the probability that an item is allocated is equal to the probability that an agent is allocated.

As $m$ grows larger than $n$, the probability that an \emph{agent} is allocated only increases: with more items, the probability that agents ``collide,'' or share their favorite item, is less likely, and this is the only event in which an agent is not allocated.  Then for all $m \geq n$, the probability that an agent is allocated in the \regfave mechanism is at least $1- 1/e$.

An agent's value for their favorite item is, by definition, at least as high as their value for the item they receive in the social-welfare-optimal allocation.  Hence we guarantee utility that is a $1 - \frac{1}{e}$ fraction of the optimal welfare.
\end{proof}

\subsection{More Bidders than Items} 

\label{sec:morebidders}
In this section, we prove a logarithmic bound between utility and social welfare in the case where there are more bidders than items, i.e., $m<n$. We do this using the \logfave mechanism, in which bidders declare their favorite item, and are then divided into $m$ single-item auctions based on what they declared. The single-item prior-free mechanism is then used as a subroutine to allocate each item $j$ amongst those who declare $j$ their favorite item.

The proof follows three steps. First, we show that for any item $j$ that the gap between the utility achieved and the highest value amongst all bidders whose favorite item is $j$ is logarithmic in the number of bidders who have that item as their favorite. Second, we lower bound the probability that the bidder with the highest value for $j$ among all bidders has $j$ as their favorite item. Then, we combine these facts using a technical lemma (Lemma~\ref{lem:bugfix}) to prove the claim.

\begin{theorem} \label{thm:logapx}
    In the i.i.d.~unit-demand setting with $n$ bidders and $m$ items,
    when $n>m$, the \logfave mechanism achieves utility at least an $O(\log \frac{n}{m} )$-approximation to optimal social welfare.
\end{theorem}
\begin{proof}
Consider any item $j$. Suppose 
there are $n'$ bidders who reported item $j$ as their favorite item, that is, $n'$ bidders
participate in the mechanism for $j$. Without loss of generality, we order the bidders by their values $v_{1}\geq v_{2} \ldots \geq v_{n'}$. Within the single-item setting for item $j$ that the Favorites mechanism has created, Theorem 5.2 from \citet{HartlineR08} shows that the single-item prior-free mechanism, our subroutine, 
guarantees utility at least 
$\frac{v_{1}}{2(1+\log_2(n'))}$. 

We now state and prove a lemma lower-bounding the probability that ``favorites align'' for any item.

\begin{lemma} \label{lem:favesalign}
    Let \emph{favorites align} for item $j$ be the event that the agent with the highest value for item $j$ among all agents ($j$'s favorite bidder) prefers item $j$. When $n>m$, this occurs for each item $j$ independently probability at least $\frac{1}{2}$.
\end{lemma}

 \begin{proof}Consider the agent $i$ with the highest value $v_{ij}$ for $j$ among \emph{all} agents. We now bound the probability that $v_1$, the highest value competing for $j$, is not $v_{ij}$. This occurs when agent $i$ is competing for a different favorite item $k$, that is, $v_{ik} > v_{ij}$.

What is the probability that $i$ prefers some other item $k$, that is, $v_{ik} > v_{ij}$?  Consider $n+m-1$ draws from the type distribution: $n$ draws for each bidder's value for item $j$, and $m-1$ draws for $v_{ik}$ for all $k \neq j$. Label $m-1$ of these draws as the $v_{ik}$'s uniformly at random.  Of the remaining $n$ draws for bidders' values for item $j$, $v_{ij}$ must be the largest, by definition.  Using this sampling process, as $n>m$, the probability that some $v_{ik} > v_{ij}$ is $< \frac{1}{2}$, hence the probability that $v_{ik}\leq v_{ij}$ is $\geq \frac{1}{2}$. \end{proof}

By Lemma~\ref{lem:favesalign}, for each item $j$, favorites align with probability at least $\frac{1}{2}$, and thus the agent with the highest value for item $j$ is competing for item $j$ in the Favorites mechanism. 
Therefore, per item $j$, we yield expected utility at least $$\E_{n'}\left[\frac{1}{2}\cdot \frac{v_{ij}}{2(1+\log(n'))}\right] .$$ 

We now combine the guarantees from the single-item subroutines with the probability that favorites align to show that our multidimensional mechanism gives our guarantee.  
We use a technical lemma (Lemma~\ref{lem:bugfix}) that shows that the correlation between the largest value for $j$, $v_{ij}$, and the number of bidders who prefer item $j$, $n'$, is well-enough behaved (i.e. $v_{ij}$ is not only large when $n'$ is large, or high-valued bidders do not only participate for competitive items). 
 The lemma is stated in slightly more general language of probability distributions, which we then translate back to our setting.

\begin{lemma} \label{lem:bugfix}
Suppose $V$ is the max of $n'$ draws from the distribution of the random variable $X$.  $n'$ is distribution according to Binomial$(n,1/m)$. For some $v'$, $\Pr[V \geq v'] \geq 1/2.$  Then for a constant $c = 1/4.45$:
$$\E \left[ \frac{V}{1+ \log n'} \mid V \geq v' \right] \geq c \cdot \frac{\E[V]}{ 1 + \log \E[ n']}.$$
\end{lemma}

In our case, $V = v_{ij}$ is the max of $n'$ draws from the i.i.d. distribution for values for item $j$, where each bidder has $j$ as their favorite item with success probability $1/m$, and the probability that $V \geq v_{ik}$ is at least 1/2 (per Lemma~\ref{lem:favesalign}).  Then by Lemma~\ref{lem:bugfix}, 
per item $j$, we yield expected utility at least $$\E_{n'}\left[\frac{1}{4}\cdot \frac{v_{ij}}{1+\log(n')} \mid v_{ij} \geq v_{ik} \right] \geq \frac{1}{4.45} \cdot \frac{1}{4}\cdot \frac{\E_{n'}[v_{ij}]}{1+\log(\E[n'])}.$$ 

Since values are i.i.d., $\E_{n'}[n'] = \frac{n}{m}$, so the per-item utility is at least $\frac{v_{ij}}{17.8(1+\log\frac{n}{m})}$. Summing across items and using the fact that $v_{ij} = v_1 $, 
we get that the expected utility of \logfave is at least a $\log$-fraction of the optimal social welfare:
$$\E[\utility] \geq \frac{\opt_{\welfare}}{17.8(1+\log(\frac{n}{m}))}.$$
\end{proof}

\begin{remark}
Note that this analysis cannot to be directly used to also give a constant approximation when $n<m$---that is, we need our separate analysis of the $n<m$ regime. This analysis uses a bound on the probability that an item's ``favorite'' agent is also that agent's favorite item. When the number of \emph{items} grows large, this probability tends to $0$. 
\end{remark}



\subsubsection{Proof of Lemma~\ref{lem:bugfix}}
In this section we present a proof of the lemma we use in our above analysis. If the number of bidders who favor an item was independent from the maximum favorite-value for that item (that is, bidder values for that item such that it is their favorite), this lemma would be immediate.  However, because these are correlated, we instead require this lemma.  It is a somewhat technical application of various probability tools, and relies on the fact that the maximum of bidder favorite-values for an item, when values are drawn i.i.d., is distributed nicely. 


\begin{proof}
Using that $n'$ is drawn according to $\mathrm{Binomial}(n,1/m)$, we want to show that the left-hand side is at least a constant times $\E[V]/(1 + \log{n/m})$.  We do this by breaking the left-hand side into two events: when $n'$ is ``close'' to its expectation $n/m$ (say, at most $10$ times its expectation), and when it is not.  Then we can write
\begin{align*}
\E \left[ \frac{V}{1 + \log n'} \mid V \geq v' \right]  &\geq \E \left[ \frac{V}{1 + \log n'} \mid V \geq v' \cap n' \leq 10 \frac{n}{m}\right] \Pr[n' \leq 10 \frac{n}{m} \mid V \geq v'] \\
&\geq \frac{\E[V \mid V \geq v' \cap n' \leq 10 n/m] \cdot \Pr[n' \leq 10 n/m \mid V \geq v']}{4.32(1 + \log{n/m})}.
\end{align*}
Because $n' \leq 10 n/m$, then $1 + \log n' \leq 1 + \log {10 n/m} \leq (1 + \log{10})(1 + \log{n/m})$ for $n/m \geq 1$, and $1 + \log_2 {10} \approx 4.32$. As a result, we only need to compare the numerators.  We now work on the right-hand side's numerator of $\E[V]$ to get it into a comparable state:
\begin{align*}
\E[V] \leq \E[V \mid V \geq v'] &= \E[V \mid n' \leq 10 n/m \cap V \geq v'] \cdot \Pr[n' \leq 10 n/m \mid V \geq v'] \\ & \quad + \E[V \mid n' > 10 n/m \cap V \geq v'] \cdot \Pr[n' > 10 n/m \mid V \geq v']
\end{align*}
Further, we see that 
and $E[V \mid n' > 10 n/m \cap V \geq v'] \leq 2 \E[V \mid n' > 10 n/m]$:
\begin{align*}
    \E[V \mid n' > 10 n/m] &= \E[V \mid n' > 10 n/m \cap V \geq v'] \cdot \Pr[V \geq v' \mid n' > 10 n/m]  \\ & \quad + \E[V \mid n' > 10 n/m \cap V \leq v'] \cdot \Pr[V < v' \mid n' > 10 n/m] \\
    &\geq \E[V \mid n' > 10 n/m \cap V \geq v'] \cdot \frac{1}{2}
\end{align*}
where this follows from  the fact that $1/2 \leq \Pr[V \geq v'] \leq \Pr[V \geq v' \mid n > 10 n/m]$ and from ignoring the second term.

And furthermore,
\begin{align*}
    \E[V \mid n' > 10 n/m] &\leq \E_{n'}[ \E_X[X \cdot n' \mid n' > 10 n/m] ] & \max_{n'} X \leq \sum_{n'} X \\
    &\leq \E[X] \cdot \E[n' \mid n' > 10 n/m] \\ 
    & \leq \E[X] \cdot (10n/m + n/m). & n' \text{ Binomial}.
\end{align*}
The first inequality is due to the max of n' draws being at most the sum of $n'$ draws. The third inequality follows from the fact that a binomial distribution conditioned on some number of successes and increased by the same number of trials is memoryless, so just conditioned on conditioned on some number of successes and with no increase in the number of trials, the expectation is less than as if it were memoryless.  Therefore,
\begin{align*}
   \E[V \mid n' \geq 10 n/m \cap V \geq v'] \leq 2 \cdot \E[X] \cdot (11n/m).
\end{align*}
Together, this gives 
\begin{align*}
\E[V] &\leq  \E[V \mid n' \leq 10 n/m \cap V \geq v'] \cdot \Pr[n' \leq 10 n/m \mid V \geq v'] \\ & \quad + 2 \cdot \E[X] \cdot (11n/m) \cdot \Pr[n' > 10 n/m \mid V \geq v']
\end{align*}
The first term appears on the left-hand side.  To get a constant-approximation, we aim to bound the probability $\Pr[n' \geq 10 \frac{n}{m} \mid V \geq v']$ to make the second term negligible.  We will first use a Chernoff bound on $\Pr[n' \geq 10 \frac{n}{m}]$. 
The Chernoff bound gives 
$$\Pr[n' \geq (1 + \delta) \frac{n}{m}] \leq exp(-\frac{\frac{n}{m} \delta^2 }{2+\delta}).$$
Then for $\delta = 9 n/m$, we get
$$\Pr[n' \geq 10 \frac{n}{m} ] \leq exp(-\frac{\frac{n}{m} 9^2 }{11}) < exp(-7.36n/m).$$
Then using Bayes' rule,
\begin{align*}
\Pr[n' \geq 10 \frac{n}{m} \mid V \geq v'] &= \frac{\Pr[n' \geq 10 \frac{n}{m}]\cdot \Pr[V \geq v'\mid n' \geq 10 \frac{n}{m}]}{\Pr[V \geq v']} & (*) \\
&\leq 2\Pr[n' \geq (1 + \delta) \frac{n}{m}] )\\
&= 2exp(-7.36n/m)
\end{align*}
where ($*$) is because $\Pr[V \geq v'] \geq 1/2$ and of course, $\Pr[V \geq v'\mid n' \geq 10 \frac{n}{m}] \leq 1$.

All together, since $\E[X] \leq \E[V]$ by definition,

\begin{align*}
    \E[V] &\leq \E[V \mid n' \leq 10 n/m \cap V \geq v'] \cdot \Pr[n' \leq 10 n/m \mid V \geq v'] \\
    & \hspace{1cm}+ 2\cdot \E[X] \cdot 22 \frac{n}{m} \cdot e^{-7.36n/m} \\ 
    (1-44 \frac{n}{m} \cdot e^{-7.36n/m}) \E[V] &\leq \E[V \mid n' \leq 10 n/m \cap V \geq v'] \cdot \Pr[n' \leq 10 n/m \mid V \geq v'].
\end{align*}

And for all $\frac{n}{m} \geq 1$ , $44 \frac{n}{m} \cdot e^{-7.36n/m}$ is upper-bounded by some small constant $c < 1$,  precisely, $0.028$,
 hence $4.32/(1-0.028) < 4.45$.  
\end{proof}

\subsection{Tightness}

In this section, we prove that our approximation for the $n>m$ regime is tight up to constant factors.  Since we obtain a constant-factor approximation in the $n \leq m$ regime, then together, this gives a tight $\Theta(1 + \log \frac{n}{m})$ gap between optimal utility and social welfare. 

Proposition 5.1 in \citet{HartlineR08} shows a matching lower-bound of $\Omega(1 + \log \frac{n}{m})$ for an instance with $m$ identical items. Note, however, that this instance is not contained within the setting of $m$ i.i.d.~items, so it does not immediately apply to our setting. One could imagine that in the i.i.d.~setting, each agent having additional independent draws from the value distribution could yield a strictly better ratio of optimal welfare to utility than in the identical-item case.

However, in this setting, we show that our approximation is indeed tight (Theorem~\ref{cor:welfgap}) using this existing lower bound. We transform the i.i.d.~unit-demand setting into the identical item setting (Definition~\ref{def:maxinst}), only increasing optimal utility (Lemma~\ref{lem:utilincreases}), and not increasing optimal welfare too much (Lemma~\ref{lem:welfconst}). 
Together, this allows us to translate the lower-bound from the identical setting into one for the i.i.d.~unit-demand setting for every $m$ and $n$. 

\begin{definition} \label{def:maxinst}
    Given an instance $I$ with $n$ bidders, $m$ items, and all values drawn i.i.d.~from a distribution $F$, let $I'$ be the instance where for each bidder $i$, all $m$ of $i$'s item values are replaced with the maximum of bidder $i$'s values in instance $I$. That is, 
    $$(v_{ij})^{I'} = \max _j (v_{ij})^{I} \quad \forall i.$$
    Observe that this transforms the i.i.d.~unit-demand setting into a setting with $m$ identical items where values are now only i.i.d.~across bidders.
\end{definition}

\begin{theorem} \label{cor:welfgap}
     In the i.i.d. unit-demand setting, the gap  between optimal utility and  optimal social welfare is $$\opt_{\welfare} / \opt_{\utility} =  \Theta(1 + \log \frac{n}{m}).$$
\end{theorem}

\begin{proof}
We define an i.i.d.~unit-demand instance $I$ on $n$ bidders $m$ items, where all values are drawn i.i.d.~from a distribution $F$ such that the maximum of $m$ draws from $F$ is exponential(1). Using Definition~\ref{def:maxinst}, we then construct a corresponding identical-items instance $I'$.

By Lemma~\ref{lem:utilincreases}, the utility of the identical-items instance is only larger: $\opt_{\utility}(I') \geq \opt_{\utility}(I)$.  By Lemma~\ref{lem:welfconst}, the welfare of the identical-items instance is within a constant of the original i.i.d.~unit-demand instance: $c \cdot \opt_{\welfare}(I)  \geq  \opt_{\welfare}(I')$ for some constant $c$.

By Definition~\ref{def:maxinst}'s construction of an identical-items instance from $I$, the distribution used in our construction of instance $I$, and 
Proposition 5.1 of \citet{HartlineR08} on $I'$, 
$$\opt_{\welfare}(I') = \Theta(m (1 + \log (\frac{n}{m})) \hspace{1cm} \text{and} \hspace{1cm} \opt_{\utility}(I') \leq m.$$
As $I$ is an instance in the i.i.d.~unit-demand setting, the theorem holds.
\end{proof}

\begin{lemma} \label{lem:utilincreases} For any instance $I$ and the transformation to $I'$ in Definition~\ref{def:maxinst},
    $\opt_{\utility}(I') \geq \opt_{\utility}(I).$
\end{lemma}

\begin{proof} Our proof consists of three immediate observations.  First, for $m$-item setting, it does not decrease the optimal utility to constrain attention to mechanisms that only allocate $m$ items, as there were only $m$ items to allocate. The original utility-optimal mechanism is still valid.

Second, it cannot not decrease optimal utility to create $n$ copies of every item: there is now only less competition amongst bidders, which improves utility. Hence we consider the setting where there are $m\cdot n$ items ($n$ copies of each different item), but only $m$ total items can be allocated.

Third, in this setting, in the utility-optimal mechanism (that only allocates $m$ items) always, without loss, each bidder that gets allocated receives their favorite item: they have more value for their favorite item than any other item, and one of the $n$ copies is available without competition. This mechanism gets at least as much utility as the original mechanism which allocates $m$ different items to agents, as the allocated agents now have value at least as high for their received items.

This new setting has optimal utility equal to that in the instance $I'$, as each agent's only competitive value in the mechanism is their maximum value.  Hence, the utility-optimal mechanism with $n$ copies of each item that allocates at most $m$ items obtains utility equal to the instance $I'$, and this is only more than the optimal utility for the original instance $I$.
\end{proof}

\begin{lemma} \label{lem:welfconst}
    For an instance $I$ with $n$ bidders, $m$ items, and all values drawn i.i.d.~from a distribution $F$ such that the maximum of $m$ draws from $F$ is exponential(1), and for the transformation to $I'$ in Definition~\ref{def:maxinst}, then for some constant $c$, 
    $$c \cdot \opt_{\welfare}(I)  \geq  \opt_{\welfare}(I').$$
\end{lemma}

\begin{proof}

First, we consider the upper bound on social welfare that could be achieved if we didn't need to respect unit-demand constraints.  That is, we could just give each of $m$ items to the bidder with the highest value for it, which is the maximum draw over $n$ bidders. Denote $X_{n}^{1}$ as the maximum of $n$ draws of the random variable $X$ drawn from distribution $F$. Then this quantity is $m \cdot \mathbb{E}[X_{n}^{1}]$. We show that this is actually a constant approximation to social welfare. 

Consider the event where favorites are aligned for item $j$: that is, the maximum value for item $j$ belongs to a buyer who prefers item $j$. By Lemma~\ref{lem:favesalign}, this event occurs with probability at least $\frac{1}{2}$. When this event occurs, it is feasible even with unit-demand constraints to allocate item $j$ to the bidder that values it the most. Finally, notice that conditioning on the event of favorites aligned means that the maximum draw for the item is also larger than all of the buyer's other values, but this is only larger than not conditioning on this: $\mathbb{E}[X_{n}^{1} \mid \text{favorites aligned}] \geq \mathbb{E}[X_{n}^{1}]$. By summing the contribution of each item $j$ in the event that it has favorites aligned, we achieve
\[
\opt_{\welfare}(I) 
\geq 
\frac{1}{2} m \cdot \mathbb{E}[X_{n}^{1}].
\]

Next, recall that by definition of instance $I$, when $X$ is drawn from $F$, $X_{m}^1$ is distributed like an exponential(1) random variable. Denote $Y$ as a random variable drawn from exponential(1). Then as $n > m$, if $n/m$ is an integer, we can divide the $n$ draws into $n/m$ groups of $m$ draws, take the maximum of each group first, and then the maximum of the maximums to achieve $X^1_n$, hence 
\[
\E[X_{n}^{1}] \geq \E[Y_{\lfloor \frac{n}{m} \rfloor}^{1}].
\]
That is, for any (potentially non-integral) $n/m$, $X^1_n$ is at least the maximum of ${\lfloor \frac{n}{m} \rfloor}$ draws from an exponential(1) distribution. Using an identity of exponential distributions, we get
\[
\E[X_{n}^{1}] \geq \E[Y_{\lfloor \frac{n}{m} \rfloor}^{1}] \geq 0.5 + \max\{0, \log\left(\frac{n}{m}\right)\}.
\]
Putting these steps together, we obtain
\[
\opt_{\welfare}(I) \geq \frac{1}{2} m \cdot \left(0.5 + \max\{0,\log\left(\frac{n}{m}\right)\}\right).
\]

We now compute $\opt_{\welfare}(I')$. Using our definition of $I$ and the transformation in Definition~\ref{def:maxinst}, $I'$ is equivalent to an instance in which there are $m$ identical copies of a good, and each bidder's value is drawn from exponential(1). Proposition 5.1 of \citet{HartlineR08} then gives that the welfare of $I'$ is $\Theta(m(1+\log(\frac{n}{m})))$.

Finally,
\[
\frac{\opt_{\welfare}(I')}{\opt_{\welfare}(I)} \leq \frac{m(1+\log(\frac{n}{m}))}{\frac{1}{2} m \cdot \left(0.5+\max\{0,\log(\frac{n}{m})\}\right)} \leq \frac{2(1+\log(\frac{n}{m}))}{0.5+\max\{0, \log\left(\frac{n}{m}\right)\}},
\]
which is a constant for all $\frac{n}{m} \geq 1$.
\end{proof}


\begin{remark}[Utility should improve in $m$] \label{remark:improve-in-m}
Note that the gap between optimal utility and optimal social welfare in the general unit-demand setting is $\Theta(\log n)$.  This is because for general distributions, for any number of items, the single-item case is still contained. For example, if all item distributions aside from one item are a point mass as 0, this matches the $\Omega(\log n)$ gap from the single-item setting.  Subsequent work by~\citet{ezra2024optimal} gives a mechanism whose utility matches this bound.  It is not interesting then that the single-item case is driving the worst-case bound for any number of items.  That is, except in pathological cases, utility really \emph{should} improve in the number of items $m$.  It is true that the i.i.d.~items cases almost forces you to care about all items in a way most opposite to this single-item example. We then pose an open question: Is there a parameterization of distributions that characterizes the improvement of utility in $m$ and that parameter?
\end{remark}

\subsection{Complications Beyond i.i.d.} \label{sec:noniid}

We now reexamine the two places where we use the i.i.d.~assumption in our main results. First, the incentive-compatibility guarantees of the $\mech$-Favorites mechanism proven in Lemma~\ref{lem:fave-bic} relied highly on symmetry.  Second, some of our approximation analysis does as well, but we suspect known techniques may extend it beyond i.i.d.  

One easy extension would be to move beyond i.i.d.~across both bidders and items to only i.i.d.~across bidders, matching the standard assumptions in the multidimensional prior-independent literature.  We look at where we have used assumption that item values are drawn i.i.d.  
At first glance, it seems to be only leveraging the fact that each of single-item subroutines is identical and has the same number of participants in expectation. A natural weaker condition would be to just assert this directly---that is, a setting in which, for each bidder, the probability that any given item is their favorite is uniform across items. We call this setting the \emph{\uniformfave} setting. 

While the \regfave mechanism used in Theorem~\ref{thm:constantapx} remains BIC in the \uniformfave setting---an agent's utility depends only on one's value and the expected number of competitors---the \logfave mechanism used in Theorem~\ref{thm:logapx} does not. 






\begin{theorem} \label{thm:pfneediid}
    In the \uniformfave setting, the \logfave mechanism is not guaranteed to be BIC.
\end{theorem}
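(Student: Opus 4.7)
The plan is to exhibit a two-item, two-bidder instance satisfying \uniformfave in which a strategic bidder strictly improves her expected utility under \logfave by misreporting which item is her favorite. The observation underlying the construction is that the proof of Lemma~\ref{lem:fave-bic} used i.i.d.\ twice: once to ensure each bidder's favorite is uniformly distributed (which \uniformfave already supplies), and once, more subtly, to guarantee that the joint distribution of the other bidders' bids inside the item-$j$ per-item auction does not depend on $j$. The \uniformfave condition gives only the first property, so the item-$1$ and item-$2$ auctions can present radically different competitive environments to a deviator.

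For the construction, I would let bidder $1$ have type $(10,9)$ with probability $1/2$ and $(9,10)$ with probability $1/2$, and let bidder $2$ have type $(H,0)$ with probability $1/2$ and $(0,\varepsilon)$ with probability $1/2$, for $H$ large and $\varepsilon > 0$ small. Each bidder has each item as her favorite with probability exactly $1/2$, so \uniformfave is satisfied, yet bidder $2$'s value \emph{conditional} on item $1$ being her favorite is a point mass at $H$, while conditional on item $2$ being her favorite it is a point mass at $\varepsilon$. Consequently the item-$1$ per-item auction under \logfave features a competing bid of $H$ with probability $1/2$, while the item-$2$ per-item auction features a competing bid of $\varepsilon$ (or no competitor at all) with probability $1/2$.

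Next I would condition on bidder $1$'s true type being $(10,9)$ and compare truthful reporting against the deviation ``report favorite $=2$, bid $9$.'' Under truthful reporting she is either alone in the item-$1$ auction (utility $\approx 10$, probability $1/2$) or facing bidder $2$'s bid of $H$; using the explicit single-item HR08 prior-free allocation-and-payment rule detailed in Section~\ref{sec:noniid} one can compute her conditional utility in the latter case and see that it is bounded well away from $10$ as $H$ grows, since the mechanism must shift most of its allocation mass toward the $H$-bidder to preserve its welfare-approximation guarantee. Under the deviation she is either alone in the item-$2$ auction (utility $\approx 9$) or facing a competitor of bid $\varepsilon$, in which case the same single-item mechanism allocates item $2$ to her at a price of at most $\varepsilon$, yielding utility at least $9-\varepsilon$. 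For $H$ large and $\varepsilon$ small the deviation's expected utility approaches $9$ while truthful's expected utility is strictly less, violating BIC.

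The main obstacle is the per-item bookkeeping: the $O(1+\log n)$ welfare-approximation guarantee bounds \emph{total} expected utility against optimal welfare but not the utility of any individual bidder on a fixed bid profile, so one cannot conclude a BIC violation from the approximation ratio alone. I would resolve this by plugging directly into the explicit allocation-and-payment rule of the HR08 single-item mechanism described in Section~\ref{sec:noniid} and computing bidder $1$'s conditional utility on the two-bid profiles $(10, H)$ and $(9, \varepsilon)$; this is a short arithmetic exercise that makes the BIC gap explicit and renders the counterexample robust. If desired, one can further amplify the gap by rescaling bidder $1$'s values (e.g.\ $(100, 99)$ and $(99, 100)$), which leaves the loss from misreporting the item essentially unchanged but magnifies the bound away from $10$ in the truthful calculation.
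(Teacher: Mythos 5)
Your proposal is correct and isolates the same conceptual failure as the paper, but via a genuinely different construction. The paper uses $n$ bidders and $2$ items with every bidder drawing $v_{i1}\sim U(0,1)$ and $v_{i2}\sim U\{0,1\}$: the point mass at $1$ on item $2$ forces all competitors in that auction to tie, so every threshold index $j\ne\log n'$ sets a price equal to the bidder's own bid and burns all surplus, while misreporting into the $U(0,1)$ auction for item $1$ leaves strictly positive expected utility at every index. Your construction is smaller and entirely explicit: two bidders, two items, with bidder~$2$ posting $H$ in the item-$1$ auction and $\varepsilon$ in the item-$2$ auction. Carrying out the arithmetic you sketch with the two-bidder HR08 rule, truthful reporting by the $(10,9)$ type yields $\tfrac12\cdot\tfrac{10}{4}+\tfrac12\cdot 10 = 6.25$ (the $\tfrac{10}{4}$ is exact and in fact independent of $H$ once $H>10$), while the deviation yields $\tfrac12\cdot 9 + \tfrac12\cdot\tfrac{18-\varepsilon}{4} = 6.75-\varepsilon/8 > 6.25$, so BIC fails as claimed. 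One small inaccuracy in the supporting intuition: for the profile $(10,H)$ the two-bidder HR08 rule does \emph{not} shift allocation mass toward the $H$-bidder---it allocates uniformly at random at both threshold indices; the loss to bidder~$1$ comes entirely from the price, since when $j=0$ the charged price $v_2=10$ equals her own bid and she earns zero conditional on winning. This does not affect the conclusion. Your upfront diagnosis---that Lemma~\ref{lem:fave-bic} implicitly needs the per-item conditional bid distributions, not just the marginal favorite probabilities, to be item-independent---is a clean articulation of exactly what both proofs exploit, and your two-bidder example is arguably the more elementary witness of the failure.
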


Recall that the single-item prior-free mechanism is detailed in Definition~\ref{def:priorfreehr}.  Informally, a random price is chosen doing the following: bidders are ordered by report, a power-of-two is chosen uniformly at random, and then value $v_{2^j+1}$ is used as a price (where $v_{n+1}$ = 0). The mechanism allocates to a bidder chosen uniformly at random from those who reported above this price.

\begin{proof}
We construct an example for the \uniformfave setting that violates BIC.  
Consider $n$ bidders and $2$ items, where each bidder draws their values i.i.d. from the following item distributions: 
$$v_{i1} \sim U(0,1) \quad\quad \text{and} \quad\quad v_{i2}  \sim U\{0, 1\}.$$ This is a \uniformfave setting, as each buyer prefers each item with probability $0.5$. 

Consider an agent $\ell$ with values $v_{\ell} = (1-\epsilon,1)$. If they report their true preference of item 2, then they are placed in contention for it. However, by our construction, agents prefer item 2 if and only if they have a value of 1. Hence, in order to earn non-zero utility from item 2, the random price of the single-item prior-free subroutine used in \logfave must not be a competitor's price; it must choose the price $v_{n'+1} = 0$, which occurs with probability $\frac{1}{1+log(n')}$. In this case, agent $\ell$ wins the item uniformly at random with probability $1/n'$, hence their expected utility is $\frac{1}{1+log(n')}(\frac{1}{n'})$.


On the other hand, if the agent misreports $v'_{\ell} = (1-\epsilon,0)$, they are placed in contention for item $1$. Then for the random price $v_{n'+1} = 0$, 
the agent gets utility $(1-\epsilon)\frac{1}{n'}$, and in the event of any other random price, 
they gain some strictly positive expected utility that depends only on $n'$. Therefore, there exists $\epsilon>0$ such that this misreport provides strictly higher utility than truthful reporting.
\end{proof}

As even this slight weakening of i.i.d. can cause a violation of BIC for $\mech$-favorites depending on $\mech$, our results are restricted to the setting where values are i.i.d.~across both bidders and items.


\begin{remark}[On only i.i.d. items]
While Lemma~\ref{lem:fave-bic} only requires i.i.d.~across items, our analysis breaks down without i.i.d.~bidders as well, as Lemma~\ref{lem:favesalign} is no longer necessarily true. One bidder's distribution for all items may be far higher than all others, and thus the probability of favorites aligning could be minuscule.
\end{remark}

\section{Discussion: Complexities of General Unit-Demand} \label{sec:why}

In this section, we discuss the complexities of the general unit-demand setting beyond when values are drawn i.i.d. We touch on the structure of multidimensional utility-optimal mechanisms and how they leverage the power \emph{not} to allocate. We also highlight difficulties in obtaining a better upper bound on optimal utility than social welfare, despite the arsenal of techniques from the revenue literature.

\subsection{Optimal Multidimensional Mechanisms} \label{sec:optmechs}

We present a surprising lack of structure on the \emph{utility-optimal} mechanism for the general unit-demand setting, which in turn motivates the study of simplicity and approximation.  We begin by reviewing what is known from the single-dimensional setting.

\paragraph{Single-dimensional Mechanisms.} In the single-parameter setting, the optimal mechanism \emph{always} allocates the items, it's just a question of to who and for how much.  The mechanism takes an analogous structure to the revenue-optimal mechanism, maximizing ``ironed'' \emph{virtual} welfare, where for utility maximization, a buyer's virtual value is $\theta_i(v_i) = \frac{1 - F_i(v_i)}{f_i(v_i)}$.  To iron into $\bar \theta_i$, this function is averaged until it is monotone non-decreasing in $v_i$.  Then, if there is one item, give it to the bidder with the highest ironed virtual value $\bar \theta_i(v_i)$.

What this mechanism concretely looks like depends highly on the \emph{hazard rate} of the buyer's type distribution, $\frac{f_i(\cdot)}{1-F_i(\cdot)}$.  In one extreme case, when the hazard rate is non-decreasing (monotone hazard rate, or MHR), $\bar \theta_i$ is fully averaged, resulting in a mechanism that allocates uniformly at random and charges no payments.  In the other extreme,
when the hazard rate is non-increasing (anti-MHR), $\theta_i$ does not need to be averaged as it is already monotone non-decreasing, and the mechanism looks like a second-price auction but in this virtual space.

\paragraph{Multidimensional Mechanisms.} From this characterization, we can already expect the optimal multidimensional mechanism to have cases that make it look like VCG, requiring payments, and cases that make it look like random allocations with no payments.  However, we conclude that getting a handle on the structure of this mechanism may be highly difficult.  

Despite the goal being to maximize utility---a dual goal of giving away all of the items most efficiently and charging as minimal payments as possible---it turns out we can't even rule out the possibility that the optimal mechanism \emph{does not allocate} an item in order to better allocate the other items and avoid using payments.  Note that this contrasts with the single-dimensional setting, where the optimal mechanism \emph{does} always allocate! 
Mathematically, this means that we cannot represent the optimal allocation as a convex combination of deterministic exhaustive allocations. 

        \begin{theorem} \label{thm:optstructure}
If we are maximizing utility over all allocations and payments, it is \emph{not} without loss to restrict attention to convex combinations of deterministic exhaustive allocations: $\sum_i x_{ij}(\v) = 1 \, \forall j, \v$. There are instances where it maximizes  utility not to allocate an item $j$ in every instance. 
\end{theorem}

Note that this applies to the general Bayesian utility maximization problem, and is not restricted to the i.i.d. setting---our construction will not be i.i.d.

\begin{proof}
We construct a setting in which the optimal mechanism must sometimes discard an item. Consider a setting with two unit-demand bidders and two items. Bidder 1 has values $(v_{11},v_{12})= (1,3)$ deterministically. Bidder 2 has two potential type profiles with equal probability,
$$(v_{21},v_{22})=\begin{cases}
    (c,c+1) & \text{w.p. } 1/2 \\
    (1,4) & \text{w.p. } 1/2.
\end{cases}$$


The efficient allocation is $x_2(1,4) = (0,1)$ and $x_2(c,c+1) = (1,0)$, allocating the other item to bidder 1, yielding expected welfare  $0.5c+4$. However, we must charge a payment of $1$ to the type  $(1,4)$ to prevent the bidder with type $(c,c+1)$ from misreporting, and this hurts utility. We now show that any convex combination of deterministic allocations loses an additive $\frac{1}{2}$ utility compared to welfare.

Assume our mechanism \emph{is} a convex combination of deterministic allocations. First observe that because we have an equal number of bidders and items, for each type profile,
not only do the allocation probabilities sum to 1 for each item across bidders, but they also must sum to 1 for each \emph{bidder} across items. As $v_1$ is fixed, we'll write our input to $x_2(\cdot)$ and $p_2(\cdot)$ as $v_2$ rather than $\v$. Now, for some probabilities $0\leq \alpha, \beta \leq 1$, let $x_2(c,c+1) = (\alpha, 1-\alpha)$ and $x_2(1,4) = (\beta, 1-\beta)$ 
(where bidder 1 is allocated whatever bidder 2 is not by assumption). Then the resulting welfare from these allocations is 
\begin{align*}
    \welfare &= 0.5\left(\underbrace{\alpha c + (1-\alpha)(c+1)}_{\text{bidder 2}} + (\underbrace{1-\alpha) +3\alpha}_{\text{bidder 1}}\right)+0.5\left(\underbrace{\beta+4(1-\beta)}_{\text{bidder 2}}+\underbrace{(1-\beta)+3\beta}_{\text{bidder 1}}\right) \\
    &= 0.5(c+2+\alpha)+0.5(5-\beta) . 
\end{align*}

The payment we must charge bidder 2 to prevent misreporting is $p_2(1,4) = \alpha-\beta$, giving utility \begin{align*}
\utility &= 0.5(c+2+\alpha)+0.5(5-\alpha) \\
 &= 0.5c+3.5.
\end{align*}
Alternatively, we could offer the allocation $x_2(1,4) = (0, \frac{c}{c+1})$. Notice that the type $(c,c+1)$ is indifferent between this allocation and the allocation $x = (1,0)$, and therefore requires no payment for incentive-compatibility. This allocation achieves the efficient outcome with probability $\frac{c}{c+1}$ and therefore is arbitrarily close to welfare as $c$ grows large.
\end{proof}

The construction in this proof actually highlights important intuition about preventing misreporting via payments versus allocations.  

\begin{observation} \label{obs:typegap}
    If the additive gap between an agent's favorite type and allocated type is small, it is best for utility to maintain incentive-compatibility with small payments, i.e., money burning.  In contrast, if the ratio between these types is small (as in the proof), we can best maintain incentive-compatibility with a fractional allocation, i.e., very rarely ``burning'' a good.
\end{observation}


This result evidences complexity in optimal mechanisms, and combined with the environment's similarities to revenue maximization, provides a convincing argument that designing utility-optimal mechanisms is extremely likely to be intractable. As a result, we focus on approximation. 

\subsection{Finding An Upper Bound on Optimal Utility}

Approximation to an unknown quantity like optimal expected utility requires a good (small) upper bound.  The most naive upper bound on utility is optimal \emph{social welfare}, which is what our result approximates, and thus also bounds the gap between optimal utility and optimal welfare in the Bayesian setting. 

However, 
one might wonder if there's a nice upper bound on optimal utility that is tighter than welfare.  Our answer is, perhaps, but certainly not one that is easy to find.  For instance, the revenue maximization literature provides many different approaches, but to the best of our knowledge, none of them can easily be adapted to utility to obtain a better upper bound. To illustrate this point, we briefly discuss the difficulties in using two of these upper bound approaches: the copies setting and the Lagrangian duality flow-based upper bound.

\paragraph{The Copies Setting.} A commonly used upper bound on revenue is the copies setting, first introduced in \citet{CHK07}. Each multidimensional bidder is split into $m$ copies who possess the same value as the original agent for item $j$, but are \emph{only} interested in item $j$.  This is now a single-dimensional setting. The copies setting is quite useful for bounding optimal revenue; for instance, the revenue of every deterministic unit-demand mechanism is upper-bounded by the optimal (single-dimensional) revenue in the copies setting~\citep{CHMS}.  However, this is not the case for utility.
While splitting each bidder into copies eliminates incentive issues across items, it also increases the competition for each item.  Further, these copy-bidders are single-minded, and thus their demands cannot be balanced across items. This form of increased competition drives down utility.  We formalize this intuition below.
\begin{theorem} \label{thm:copies}
There exists a distribution $F$ and deterministic mechanism $M$ such that the expected utility of $M$ over $F$ is greater than that of single-item optimal utility in the  copies setting induced by $F$.
\end{theorem}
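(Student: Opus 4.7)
The plan is to exhibit a small, explicit instance and deterministic unit-demand mechanism that beat the copies optimum by exploiting anti-correlation across items within each bidder. Take $n=m=2$ and let $F$ be the distribution in which each bidder $i$ independently draws $v_i = (10,2)$ with probability $1/2$ and $v_i = (2,10)$ with probability $1/2$. Each bidder has a unique favorite item, while the marginal distribution of $v_{ij}$ is uniform on $\{2, 10\}$, so for every item $j$ the copies market is a symmetric two-bidder auction with i.i.d.\ values in $\{2, 10\}$.

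For the unit-demand side, take $M$ to be the serial dictatorship: bidder $1$ is assigned their reported favorite item, bidder $2$ is given the remaining item, and no payments are charged. Bidder $1$ always receives their favorite (worth $10$), so truthful reporting strictly dominates any misreport; bidder $2$'s allocation is independent of their own report, so they are weakly truthful, and $M$ is BIC. Letting $j^*$ denote the item not taken by bidder $1$, independence across bidders gives $\mathbb{E}[v_{2,j^*}] = \mathbb{E}[v_{2j}] = 6$, so the expected utility of $M$ is $10 + 6 = 16$.

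For the copies side, I apply Myerson's envelope to the symmetric two-bidder, two-type auction. Any BIC mechanism with monotone interim allocations $\bar x(2) \le \bar x(10)$ and non-negative payments attains ex-ante per-bidder utility at most $\tfrac{1}{2}U(2) + \tfrac{1}{2}U(10) = 2\bar x(2) + 4\bar x(10)$, by taking IR tight at the low type ($U(2) = 2\bar x(2)$) and upward IC tight at the high type ($U(10) = U(2) + 8\bar x(10)$). Ex-post feasibility---either by direct enumeration of the four possible value profiles, or by Border's theorem---forces $\bar x(10) \le 3/4$ and $\bar x(2) \le 1/4$. Hence per-bidder utility is at most $7/2$, per-item utility at most $7$, and the total single-item copies optimum is at most $14$.

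Combining yields $\utility(M) = 16 > 14 \ge$ copies optimum. The main technical obstacle is verifying the tight upper bound on the single-item copies optimum in the third step; conceptually, the copies mechanism must pay an interim payment of $1$ at the high type to screen under within-item competition, which eats into utility, whereas the unit-demand mechanism exploits the within-bidder anti-correlation to route each bidder to their unique favorite at no charge.
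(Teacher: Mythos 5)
Your computations are correct, but your construction steps outside the paper's framework in a way that creates a genuine gap. Throughout the paper, $F$ denotes a single type distribution with values drawn i.i.d.\ across both items and bidders, so that $\multidist = F^{n\times m}$; the theorem statement and its phrase ``the copies setting induced by $F$'' inherit this convention. Your distribution is not of this form: within each bidder the two item-values are perfectly anti-correlated (always $\{2,10\}$ in some order), so there is no single type distribution $F$ with $\multidist = F^{n\times m}$ generating it. You are therefore proving a related but weaker claim---that the copies bound can fail once one permits within-bidder correlation---which is considerably less surprising, since the entire logic of the copies reduction is to sever the coupling between items within a bidder, and a designed anti-correlation is exactly the worst case for it.

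The paper's own proof stays within the i.i.d.\ setting: it takes $2$ bidders and $2$ items with values drawn i.i.d.\ from any anti-MHR $F$, for which the copies-optimal per-item mechanism is Vickrey, and then exhibits the deterministic Vickrey-Favorites mechanism (run Vickrey among bidders who name the item their favorite). The point is that even under i.i.d.\ values---the exact regime of the paper's main results---the copies setting overcharges precisely because it ignores that bidders with distinct favorites need not compete at all. Establishing the failure in i.i.d.\ is what the paper needs in order to justify discarding the copies benchmark for its results, and your counterexample does not deliver that. (Also, a minor point: what you call ``IR tight at the low type'' is really ``zero payment at the low type,'' i.e., $U(2)=2\bar x(2)$ comes from $p(2)=0$, not from $U(2)=0$; and what you call ``upward IC tight at the high type'' is the upward IC constraint of the low type being binding. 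These slips do not affect the arithmetic, and your bound of $14$ for the copies optimum is correct.) To repair the argument, replace your anti-correlated $F$ with an i.i.d.\ construction---for instance follow the paper and use any anti-MHR distribution with Vickrey-Favorites, or redo your explicit calculation with $v_{ij}$ drawn i.i.d.\ from $U\{2,10\}$ and a Favorites-style mechanism.
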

\begin{proof}
Consider a setting with 2 items and 2 bidders whose values are drawn i.i.d. from any anti-MHR distribution. The induced copies setting includes a constraint that only one of the copies of each bidder can be allocated. Clearly relaxing this constraint can only increase the utility that can be achieved.
In this relaxed copies setting, because $F$ is anti-MHR and there are no constraints across items, the optimal per-item mechanism is a Vickrey (second-price) auction. 

In the VCG mechanism, for each value profile in which both bidders have the same favorite item, VCG achieves the same utility as the relaxed  copies setting. However, when the bidders have different favorite items, VCG allocates efficiently while charging a payment of $0$, while the copies setting must still charge a price, resulting in less utility. 
\end{proof}


\begin{remark}[Item-Copies]
One might consider a variant on the copies setting that instead splits each item into multiple copies, as in the mechanism from~\citet{ezra2024optimal}.  However, the same pathological example to that of the general $\Omega(\log n)$ lower-bound in Remark~\ref{remark:improve-in-m} where agents are only interested in one item shows that $n$ copies of that item would be required in order for, e.g., the utility of the VCG mechanism in the item-copies setting to produce an upper bound on optimal utility. But then, of course, the optimal utility of the setting with $n$ copies is just social welfare.
\end{remark}

\paragraph{Duality-Based Upper Bounds} 
Another technique used with great success in recent years is a Lagrangian-duality-based approach to formulating an upper bound on optimal revenue.  This method formulates the optimization problem as a linear program, relaxes certain constraints using Lagrange multipliers, and takes its dual program.  Within the dual program, any feasible dual variables result in a dual objective which upper bounds the optimal primal.  \citet{CDW} reinterpret these feasibility constraints as a network flow problem, and use this structure to construct dual variables that lead to a nice (tighter) upper bound on revenue.


While the dual of the utility maximization linear program can also be expressed as a network flow problem, it lacks some of the key properties used in the revenue maximization approach to construct the ``right'' dual variables. Importantly, the single-bidder revenue maximization flow problem has an optimal solution that can be generalized to the multi-bidder setting. The single-bidder utility-maximization dual, however, has countless trivial optimal solutions to the flow problem. Furthermore, some of these optimal solutions give the trivial bound of welfare when extended beyond the single-bidder setting. Then to create a useful flow, one must work in the much more complex multi-bidder setting at the onset. Furthermore, analogues of and variations on the ``canonical'' \citet{CDW} multi-item revenue flow yield upper bounds for utility that are even greater than social welfare.

\paragraph{An Open Question.} The above motivates a very important and large open question for this community, likely to require technical innovation: what is a tractable upper bound on optimal utility?  Is there a framework in the spirit of \citep{CDW} to produce such upper bounds across various multidimensional environments?

\section*{Acknowledgments}
    The idea for this research direction originated jointly with Anna Karlin out of conversations with Mark Braverman about \citep{BravermanCK16}.  We would like to thank Anna Karlin, Matt Weinberg, and Alon Eden for helpful conversations in early stages of this work. We also thank Nathan Klein and Divyarthi Mohan for key ideas in the proof of Lemma~\ref{lem:bugfix}.

\bibliographystyle{plainnat}
\bibliography{masterbib}

\end{document}